\documentclass[final,5p,times,twocolumn,lefttitle]{elsarticle}
\usepackage{latexsym}
\usepackage{amsmath}
\usepackage[utf8x]{inputenc}
\usepackage[T1]{fontenc}
\usepackage{subdepth}
\usepackage{color}
\usepackage{soul}
\usepackage{csquotes}
\usepackage{amsthm}
\usepackage{amssymb}
\usepackage{tensor}
\usepackage{dsfont}
\usepackage{tikz-cd}


\newtheorem{Theorem}{Theorem}

\usepackage{color}
\usepackage{graphicx}
\usepackage{xcolor}
\definecolor{bluc}{cmyk}{1,0.9,0,0}
\definecolor{rossoCP3}{cmyk}{0,.88,.77,.40}
\definecolor{rosso}{cmyk}{0,1,1,0.4}
\definecolor{rossos}{cmyk}{0,1,1,0.55}
\definecolor{rossoc}{cmyk}{0,1,1,0.2}
\definecolor{verdes}{cmyk}{0.92,0,0.59,0.4}
\usepackage[normalem]{ulem}
\usepackage{hyperref}
\hypersetup{colorlinks, bookmarksopen, bookmarksnumbered, citecolor=verdes, linkcolor=bluc, pdfstartview=FitH, urlcolor=rossos}


\begin{document}

\begin{frontmatter}
\title{S-Duality and Categorical Gauge Theory}
\author[First]{Javier Chagoya}
\ead{javier.chagoya@fisica.uaz.edu.mx}
\author[Second]{A.D. L\'opez-Hern\'andez}
\ead{ad.lopez.hernandez@ugto.mx}
\author[Second,Third]{M. Sabido}
\ead{msabido@fisica.ugto.mx}
\affiliation[First]{organization={Unidad Acad\'emica de F\'isica, Universidad Aut\'onoma de Zacatecas},
addressline={Calzada Solidaridad esquina con Paseo a la Bufa S/N}, city={Zacatecas},
            postcode={98060}, 
            state={Zacatecas},
            country={ M\'exico}}
\affiliation[Second]{organization={Departamento de F\'isica de la Universidad de Guanajuato},
addressline={A.P. E-143}, city={Le\'on},    
            postcode={37150},
            state={Guanajuato},
            country={M\'exico}}   
\affiliation[Third]{organization={Department of Theoretical Physics, University of the Basque Country UPV/EHU},
addressline={P.O BOX 644}, city={},
            postcode={48080}, 
            state={Bilbao},
            country={Spain}} 
\begin{abstract}
{In this work, we revisit abelian S-duality in the context of 
higher gauge theory. By using a specific crossed module
a  set of transformations arise, which are known as the {\it ``thin''} and {\it ``fat''} transformations. The {\it ``fat''}  transformations are the ones  introduced by hand to construct the S-dual theory.  By utilizing crossed modules and higher gauge theory, we arrive at transformations referred to by some as “third-type transformations”, giving these transformations a geometric interpretation,
as transformations of a local 2-connection in a 2-principal bundle. This approach opens a new  perspective to understand S-duality  in non-abelian gauge theories through higher gauge theory.} 
\end{abstract}
\begin{keyword}
Categorical Gauge Theory\sep S-duality.
\end{keyword}
\end{frontmatter}

\section{Introduction}
One of the most developed tools to understand non perturbative effects in field theory is S-duality. It allows to compute in the strong coupling limit, it relates different models of string theory, and is one of the fundamental dualities in the context of M-theory. S-duality was originally realized in the context of supersymmetric theories, i.e $\mathcal{N}=4$ Yang-Mills theory, $\mathcal{N}=2$ SUSY QCD, and in  perturbative superstrings theory. This idea has also been explored in non-supersymmetric gauge theories, where the best example of S-duality in non-supersymmetric gauge theory is the abelian case \cite{witten1995s,Verlinde:1995mz,Lozano:1995aq}, where one considers a $U(1)$ gauge theory with a $\theta$ vacuum coupled. Moreover, the partition function $Z(\tau)$ transforms as a modular form under a finite index subgroup of SL(2,{\bf Z}), where $\tau$ is constructed with the electromagnetic coupling constant $g$ and usual $\theta$ angle. It is worth mentioning, that for the non-abelian cases \cite{Ganor:1995em},  
there is no dual theory in the same sense as for abelian theories, nonetheless, one usually follows the same procedure as in the abelian case. First we construct an intermediate Lagrangian from which one recuperates the original Lagrangian and its dual. In the same manner, S-duality in gravity has been explored \cite{Garcia-Compean:1997qar,Garcia-Compean:1999fdl,Garcia-Compean:2000uho}, by using a Yang-Mills type formulation for gravity.\\
S-duality incorporates a transformation of auxiliary degrees of freedom in a way that cannot be fully captured by ordinary principal bundles and connections. Higher gauge theory provides tools to address this, such as strict $2$-principal bundles~\cite{lopez2025categorical, baez2005higher, baez2011invitation, schreiber2008connections, schreiber2011smooth}, which make it possible to describe these structures in a consistent way. In general terms this generalization introduces a 2-form connection, in addition to the 1-form connection of conventional gauge theory. While the 1-form connection gives to curves holonomies in a gauge group, the 2-form connection is used to give to surfaces a new type of surface holonomy, represented by elements of another group. The group structure in this theory is called strict 2-groups (a categorical generalization of a group), which are equivalent to the so called crossed modules. This generalization faces a problem. Ensuring the consistency of the corresponding parallel transport requires that the so-called 2-form fake curvature must vanish \cite{baez2005higher,schreiber2011smooth}. However, some recent works \cite{Kim_2020, krist2022explicit} have addressed this problem by including what they call an adjustment, which is reflected 
in the 2-group structure and the connection.
Some physical applications of higher gauge theories 
explore its use in reformulating gravity 
and gauge theories. For instance, a particular categorization of $BF$ theory has been shown to 
allow the coupling of topological matter fields to 2+1 and 3+1 general relativity~\cite{Girelli:2007tt}. A different approach obtains dynamical degrees of freedom after imposing a simplicity constraint. Specifically, it has been shown that a classical action describing the Standard Model coupled to Einstein Gravity can be obtained from a constrained $BF$ theory based on a $3$-group~\cite{Radenkovic:2019qme}. Furthermore, focusing on deriving a theory of modified gravity from Categorified $BF$, a theory whose metric field equations are the same as those of unimodular gravity was presented in~\cite{lopez2025categorical}.\\
The purpose of this work is to study S-duality for non supersymmetric theories in the context of categorical gauge theory. The presence of a second connection can enhance the gauge transformations and can allow the realization of S-duality. The end result is that for abelian categorical gauge theory, the gauge theory associated to the first connection is S-dual to the gauge theory of the second connection. 
The manuscript is organized as follows. In Sec.~\ref{categorical_s}, 
we 
categorize the principal bundle to introduce the notion
of 2-principal bundles and their associated 2-form connections to construct the categorical abelian gauge theory (the formal details are in the appendix~\ref{appendix}). After constructing a Lagrangian invariant under the categorical gauge transformation, we make the connection to abelian S-duality. Lastly, Sec.~\ref{conclusions} is devoted to  final remarks.

\section{Categorical gauge theory and S-duality}\label{categorical_s}

The generalization of gauge theory, known as higher gauge theory~\cite{lopez2025categorical, baez2005higher, baez2011invitation, schreiber2008connections, schreiber2011smooth}, is in general terms, an extension of conventional gauge theory that introduces a 2-form connection in addition to the 1-form connection.  
The group structure in this theory is a strict 2-group (a categorical generalization of a group), which is equivalent to a crossed module, a less abstract formulation. A crossed module denoted by $(G,H,\rho ,\triangleright)$ consists of two Lie groups $G$ and $H$ with a group homomorphism $\rho: H \rightarrow G$ and an left action of $G$ on $H$. 
\footnote{The homomorphism and the action satisfy the Peiffer identity $\rho(h) \triangleright f=hfh^{-1}$ and the equivariant condition $\rho(g\triangleright h)=g\rho (h)g^{-1}$ for all $g\in G$ and $h,f\in H$.}

On the other hand, an important concept arising from crossed modules is a differential crossed module, which in a $2$-Lie group plays an analogous role as  that of a Lie algebra in an ordinary Lie group. A differential crossed module $(\mathfrak{g},\mathfrak{h},\rho _{\ast}, \triangleright ^{\prime})$ over a crossed module $(G,H,\rho ,\triangleright)$ consists of the Lie algebras of $G$ and $H$ respectively, along with a Lie algebra homomorphism $\rho _{\ast}:\mathfrak{h}\rightarrow \mathfrak{g}$ induced by the differential of $\rho$, and a left action of $\mathfrak{g}$ on $\mathfrak{h}$ induced by the action of $G$ on $H$.\\
The crossed module of interest in this paper is $({\rm SO}(2),U(1),\rho,\triangleright _{\rho})$ where
\begin{equation}
  \rho(e^{\frac{i}{2}\alpha})=
  \begin{pmatrix}
        \cos(\alpha) & -\sin(\alpha)\\
        \sin(\alpha) & \cos(\alpha)
  \end{pmatrix},
\end{equation}
and $g\triangleright _{\rho}h=h$ for all $g\in {\rm SO}(2)$. The associated differential crossed module is $(\mathfrak{so}(2), \mathfrak{u}(1), \rho_{\ast}, \triangleright ^{\prime})$, where 
\begin{equation}
  \rho _{\ast}(\frac{i}{2}\alpha)=\alpha J,\;\;\;J=
  \begin{pmatrix}
        0 & -1\\
        1 & 0
  \end{pmatrix},
\end{equation}
is a isomorphism and $\triangleright ^{\prime}=0$. 
Additionally,
\begin{equation}
            \langle \frac{i}{2},\frac{i}{2} \rangle_{\mathfrak{u(1)}}=\langle J,J\ \rangle_{\mathfrak{so(2)}}=1,
\end{equation}
which are bilinear, symmetric, non-degenerate, invariant under $U(1)$ and ${\rm SO}(2)$, respectively, and invariant forms.\\        
From higher gauge theory, we obtain a local $2$-connection $(A,G)$ in a $2$-principal bundle, where $A$ is a $\mathfrak{g}$-valued $1$-form (local connection) and $G$ is a $\mathfrak{h}$-valued $2$-form, The corresponding curvatures can be expressed in the abelian case presented here as:
\begin{align}
        \mathcal{F}_{A,G}&:=F_{A}-\rho _{\ast}G \equiv dA-\rho _{\ast}(G),\nonumber \\
        \mathcal{G}_{A,G}&:=dG,
    \end{align}
referred to as fake curvature and $3$-curvature, respectively. There exist two types of transformations for the $2$-connection $(A,G)$, one corresponding the $G$ Lie group and the other involving the $H$ Lie group:

\begin{itemize}
    \item Thin gauge  transformations:
A smooth function ``gauge transformation'' $g:M\rightarrow {\rm SO}(2)$  induces the transformations.
        \begin{align}
            A & \longrightarrow A+g^{-1}dg,\nonumber \\
            G & \longrightarrow G,
        \end{align}
        which in turn cause the curvature $F$, the false curvature $\mathcal{F}$, and the $3$-form curvature $\mathcal{G}$ to transform as.
        \begin{align}
        \label{tdelgadas}
             F&\longrightarrow F,\nonumber \\ \mathcal{F}&\longrightarrow \mathcal{F},\nonumber \\
             \mathcal{G}&\longrightarrow \mathcal{G}.
        \end{align}
   \item Fat gauge transformations:
Given a $\mathfrak{so}(2)$-valued $1$-form $B$, the transformations are:
        \begin{align}
        \label{fattransf}
            A&\longrightarrow A+B,\nonumber \\
            G& \longrightarrow G +d(\rho_{\ast}^{-1}B),
        \end{align}
        under which the curvature $F$, the false curvature $\mathcal{F}$, and the $3$-form curvature $\mathcal{G}$ transform as.
        \begin{align}
            F & \longrightarrow F+dB,\nonumber \\ \mathcal{F} & \longrightarrow \mathcal{F}, \nonumber \\
            \mathcal{G} & \longrightarrow \mathcal{G}.
        \end{align}
\end{itemize}
Once the symmetries have been  established, now we construct and invariant action under thin and fat gauge transformations. First because $\rho _{\ast}$ is an isomorphism and act only in the Lie algebra part, we can write without loss of generality $\rho{\ast}(G)$ as $G$ and $\rho_{\ast}^{-1}B$ as $B$. Thus we can define the field $\mathcal{F}=F-G$, and write the usual kinetic term plus the topological term for the 2-form $\mathcal{F}$ as the invariant action
\begin{equation}
\label{eq:accioncategorica}
        S=\int_M \frac{1}{g^2}\mathcal{F} \wedge \star \mathcal{F}+\frac{i\theta}{8\pi ^{2}}\mathcal{F} \wedge  \mathcal{F}.
\end{equation}

Moreover, one can write the action using the (anti)self-dual fields. Using the $\star$ is the Hodge operator $F^\pm$,  is defined  as  $2F^\pm=F\pm \star F$  and similar expression for $G^\pm$. After defining the complex coupling is defined as
    \begin{equation}
        \tau =\frac{\theta}{2\pi}+\frac{4\pi i}{g^{2}},
    \end{equation}
the action in Eq.\eqref{eq:accioncategorica} is given by
\begin{equation}\label{FF}
    \begin{split}
        S&=\frac{i}{4\pi}\int_M \left(\bar{\tau} \mathcal{F}^{+}\wedge   \mathcal{F}^{+}+\tau \mathcal{F}^{-}\wedge  \mathcal{F}^{-}\right)\\
        &=\frac{i}{8 \pi}\int_M d^{4}x\sqrt{g}\left(\bar{\tau} \mathcal{F}^{+}_{mn} \mathcal{F}^{+mn}-\tau \mathcal{F}^{-}_{mn}\mathcal{F}^{-mn}\right).
    \end{split}
\end{equation}    
As already stated, the action in terms of $\mathcal{F}^+$ and $\mathcal{F}^-$ is invariant under thin and fat gauge transformations.\\

On the other hand, by introducing a dual $1$-form connection $V$ with a gauge transformation $V\to V- d\alpha$ and curvature\footnote{Using the curvature $W$, we can construct
 a generalized type $BF$ term which constrains $dG=0$. } $W=dV$, we can construct the action 
\begin{equation}\label{WG}
    \frac{1}{2\pi}\int_M W \wedge G,
\end{equation}
which is also invariant (modulo  $2\pi$) under the thin and fat gauge transformations\footnote{The curvature $W$ satisfies the Dirac quantization condition $\int_{S}W=2n\pi$,\quad $n=\pm 1,\pm2,\cdots$.}. Finally, adding Eq.(\ref{FF}) and Eq.(\ref{WG}), we construct the action
\begin{equation}\label{master}
        I=\frac{i}{4\pi}\int_M 2W \wedge G+\bar{\tau} \mathcal{F}^{+}\wedge  \mathcal{F}^{+}+\tau \mathcal{F}^{-}\wedge  \mathcal{F}^{-}.
\end{equation}
Next we need to show that that this action is equivalent to Maxwell theory. Integrating out $V$ in the functional integral we get that $dG=0$ and from the thin and fat gauge transformations, we  take the gauge $G=0$. Then Eq.(\ref{master}) reduces to
\begin{equation}\label{maxwell}
    \begin{split}
        I
        &=\frac{i}{4\pi}\int_\mathcal{M}\left(\bar{\tau} F^{+}\wedge 
        F^{+}+\tau F^{-}\wedge   F^{-}\right)\\
        &=\frac{i}{8 \pi}\int_\mathcal{M} d^{4}x\sqrt{g}\left(\bar{\tau}F^{+}_{mn} F^{+mn}-\tau F^{-}_{mn}F^{-mn}\right),
    \end{split}
\end{equation}
which is the Maxwell action, also the transformations reduce to the regular $U(1)$ gauge transformations.\\ 
Alternatively, because of the thin and fat  gauge transformations, we can take the gauge $A=0$ and write Eq.(\ref{master}) in terms of the (anti) self-dual curvatures
\begin{equation}
    \begin{split}
        I&=\frac{i}{4\pi}\int_M 2 \left(W^{+}\wedge  G^{+}+W^{-}\wedge  G^{-}\right)\\
    &+\left( \bar{\tau} G^{+}\wedge  G^{+}+\tau G^{-}\wedge  G^{-}
    \right).
    \end{split}
\end{equation}
We  integrate out the fields $G^+$ and $G^-$ in the path integral, this leaves a Lagrangian that only contains $W$ (for details see ~\cite{witten1995s}),   
    \begin{align}\label{eq:dualw}
        I&=\frac{i}{4\pi}\int_\mathcal{M} \left(-\frac{1}{\bar{\tau}} {W}^{+}\wedge  {W}^{+}-\frac{1}{{\tau}} {W}^{-}\wedge  {W}^{-}\right)\\
        &=\frac{i}{8\pi}\int_M d^{4}x\sqrt{g}\left[\left(\frac{-1}{\bar{\tau}}\right )  W_{mn}^{+}W^{+mn} \right.\nonumber \\ &\left. -\left(\frac{-1}{\tau }\right)W_{mn}^{-}W^{-mn}
        \right],\nonumber
    \end{align}
this is the Maxwell action, with the regular gauge symmetry but with coupling constant $\tau\rightarrow -1/\tau$. Therefore we have constructed the Maxwell Lagrangian with and inverted coupling constant.\\
One might wonder why not simply use a crossed module of the form $(G,G,id_{G},\triangleright)$ which is denoted $INN(G)$~\cite{roberts2007inner}. However, this crossed module turns out to be equivalent to the trivial one, rendering the resulting theory purely gauge.\\ 
If we tried to construct the 1-form $V$ through the 2-connection $G$, we encounter 
the following problems. There are two different ways to construct an $1$-form from $G$,
 $V = \star dG$ (pseudovector) or $V = \star d \star G$. The first is invariant under thin and 
fat gauge transformations, while $\star d \star G$ is not. So if we choose $V = \star dG$
and impose the gauge transformation $V \rightarrow V - d\alpha$, we need the 2-connection 
$G$ to transform as $G \rightarrow G + \gamma$, such that $d\gamma = * d\alpha$,
this is because $dG$ needs to be the curvature of the 2-connection. Now, this imposes the condition
$d\star d\alpha = 0$, which is the transformation condition between connections in the Lorenz gauge.
Also, the invariant term $dG\wedge \star dG$  can be rewritten as $V\wedge \star V$ and is a term that transforms 
as $V\wedge \star V\rightarrow V\wedge \star V+d(\alpha \wedge (2dG+\star d\alpha))$. While the terms involving $\mathcal{F}$ in Eq.~\eqref{eq:accioncategorica}, transform as
\begin{equation}
\begin{split}
        &\frac{1}{g^2}\mathcal{F} \wedge \star \mathcal{F}+\frac{i\theta}{8\pi ^{2}} \mathcal{F} \wedge  \mathcal{F}\rightarrow 
        \frac{1}{g^2}\mathcal{F} \wedge \star \mathcal{F}+\frac{i\theta}{8\pi ^{2}}\mathcal{F} \wedge  \mathcal{F}\\
        &+(\gamma-2\mathcal{F})\wedge (\frac{\star \gamma}{g^{2}}+\kappa\gamma),\nonumber
\end{split}
\end{equation}
where $\kappa=\frac{i\theta}{8\pi ^{2}}$.\\ 
For the action to remain invariant there are two possibilities. We can have $d(\gamma-2\mathcal{F})=0$ and $d(\star \gamma+g^2\kappa\gamma)=0$, which leads to $W=dV=d\star dG=\frac{1}{2}d^{2}\alpha=0$, meaning that $V$ is a pure gauge. {Alternatively, we can have  $(\star \gamma+g^2\kappa\gamma)=0$, this can be achieved with $(\theta ^{2}g^{4}+8^2\pi^4)\gamma=0$} but this implies that  $\theta$ is imaginary. This imposes that $\gamma =0$ and consequently $d\alpha =0$ and $V$ does not admit a gauge transformation, this should be expected as the theory  has mass-like term for $V$.\\
{We can arrive to the same conclusion if we consider the field equations. By varying respect to $G$ and $A$,  
after fixing the gauge with $A=0$ and using the operator $\star d\star$ we obtain
\begin{equation}
    \star d\star W=\frac{g^{\prime \;2}}{g^{2}}\left[1+\frac{\theta^2g^{4}}{64\pi ^{4}}\right]V,
\end{equation}
which is a Proca-type equation in Euclidean space.
It may be surprising that the term $\theta$ appears in the classical equation, but this can be traced to the fact that 
the $\theta$-term constructed through $\mathcal{F} = dA - G$ is not a topological term unless $dG = 0$. On the other hand, the lack of an equation of the form $dG=0$ makes it difficult to choose an appropriate gauge for $G$ and to find the analog of a dual theory.}

\section{Final remarks}\label{conclusions}
In this paper we studied  categorical gauge theory in connection to S-duality.
The use of categorical gauge theory, enhances the gauge transformations, as not only the usual {\it``thin"}  gauge transformations are present but also a new set of {\it``fat"} gauge transformations. These are the generalized transformation introduced by hand in \cite{witten1995s} to construct abelian S-duality. By utilizing crossed modules and higher gauge theory, we arrive at transformations referred to by some as ``third-type transformations'', giving  these transformations a geometric interpretation, as transformations of a local 2-connection in a 2-principal bundle. Then after introducing a dual connection, one can construct a dual theory with and inverted coupling constant. Moreover, following the same steps as in \cite{witten1995s}, the partition function is modular invariant and therefore concluding the construction of the S-dual theory. {Additionally, this ``technique'' offers the possibility of obtaining results for the non-abelian case by using a crossed module $({\rm Spin}(n),{\rm SO}(n),\rho,\triangleright)$.
For instance, for $n = 3$ we have the crossed module $({\rm SU}(2),{\rm SO}(3),\rho,\triangleright)$.\\}
In summary, S-duality seems to be a natural property in abelian categorical gauge theory and might be the appropriate construction to study S-duality in non-supersymmetric abelian gauge theory. Considering that S-duality in the non abelian case is not well understood (except in some some supersymmetric cases) it is interesting to pursue a construction based on categorical gauge theory. Results in this direction are under research and will be presented elsewhere.

\section{Appendix}\label{appendix}
 
\noindent Just as a local connection can be viewed as a functor, a local 2-connection can be seen as a strict 2-functor (see~\cite{baez2011invitation, schreiber2011smooth, lopez2025categorical} for details)
\begin{equation}
    hol:\mathcal{P}_{2}(U)\rightarrow \mathcal{G}
\end{equation}
for some Lie strict 2-group $\mathcal{G}$. This 2-functor is equivalent to a tuple $(A,\beta)$ where $A$ is a $\mathfrak{g}$-valued 1-form (local connection) and $\beta$ is a $\mathfrak{h}$-valued 2-form. With this 2-connection, the corresponding curvature is
\begin{align}
        \mathcal{F}_{A,\beta}&=F_{A}-\partial _{\ast}\beta \equiv dA+A\wedge A-\partial _{\ast}\beta,\nonumber \\
        \mathcal{G}_{A,\beta}&=d\beta +A\wedge ^{\triangleright ^{\prime}}\beta,
    \end{align}
referred to as fake curvature and $3$-curvature, respectively.\\
\noindent In this generalization, gauge transformations are described by what is known as a natural pseudotransformation between the functors of holonomy:
\begin{Theorem}
    Let $hol ^{\prime}, hol:\mathcal{P}_{2}(U)\rightarrow \mathcal{G}$ be smooth $2$-functors with associated $1$-forms $A^{\prime},A\in \Omega^{1}(U,\mathfrak{g})$ and $2$-forms $\beta ^{\prime},\beta \in \Omega^{2}(U,\mathfrak{h})$ respectively. The smooth function $g:U\rightarrow G$ and the $1$-form $\eta \in\Omega^{1}(U,\mathfrak{h})$ extracted from a smooth pseudonatural transformation $\rho:hol^{\prime}\rightarrow hol$ satisfy the relations
    \begin{equation}
        A+\partial _{\ast}(\eta)=gA^{\prime}g^{-1}-(dg)g^{-1}
    \end{equation}
    \begin{equation}
        \beta +A\wedge ^{\triangleright ^{\prime}}\eta+d\eta+\eta \wedge\eta=g\triangleright ^{\prime \prime}\beta ^{\prime}
    \end{equation}
\end{Theorem}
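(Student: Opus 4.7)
The plan is to unpack the definition of a smooth pseudonatural transformation $\rho: hol' \to hol$ between the two 2-functors and extract its component data on the 0-cells and 1-cells of $\mathcal{P}_2(U)$. Such a $\rho$ assigns to every $x \in U$ a 1-morphism $\rho_x$ in $\mathcal{G}$, and to every thin path $\gamma: x \to y$ a 2-isomorphism $\rho_\gamma$ filling the naturality square. Since 1-morphisms of $\mathcal{G}$ are exactly elements of $G$, the family $\{\rho_x\}_{x \in U}$ is a smooth map $g:U\to G$. Since, in a crossed module, 2-morphisms with prescribed source are parametrized by $H$, the assignment $\gamma \mapsto \rho_\gamma$ is encoded, via a path-ordered surface integral, as a smooth 1-form $\eta \in \Omega^{1}(U,\mathfrak{h})$. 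Both identities will then be produced by Taylor-expanding the pseudonaturality coherence squares around a point.

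First I would derive the 1-form relation from naturality on 1-cells. In the crossed-module description, $\rho_\gamma$ has source $hol(\gamma)\cdot g(x)$ and target $g(y)\cdot hol'(\gamma)$, which forces the underlying $G$-level identity
\begin{equation}
g(y)\cdot hol'(\gamma) \;=\; \partial\!\left(\rho_\gamma\right)\cdot hol(\gamma)\cdot g(x).
\end{equation}
Taking $\gamma$ to be an infinitesimal path from $x$ to $x+dx$ and Taylor-expanding $hol(\gamma)=1+A\,dx+\cdots$, $hol'(\gamma)=1+A'\,dx+\cdots$, $g(y)=g(x)+dg+\cdots$, and $\partial(\rho_\gamma)=1+\partial_\ast(\eta)\,dx+\cdots$, the linear-order terms, after right-multiplying by $g^{-1}$, reproduce $A+\partial_\ast(\eta)=g A' g^{-1}-(dg)g^{-1}$. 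This step is essentially the classical derivation of the Maurer--Cartan gauge-transformation formula for a $G$-connection, now corrected by the $\partial_\ast(\eta)$ contribution coming from the $H$-valued component of $\rho$.

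Next, the 2-form identity should be obtained by applying the coherence axiom of $\rho$ to an infinitesimal bigon $\Sigma$ whose boundary consists of two thin paths between the same endpoints. The coherence square relates $\rho_{\partial\Sigma}$ with the surface holonomies $hol(\Sigma)$ and $hol'(\Sigma)$; expanding everything to second order in the parameters of $\Sigma$ one sees the contributions appear in a natural order: $\beta$ and $g \triangleright'' \beta'$ from the surface holonomies; $d\eta$ from the Stokes-type boundary contribution of $\eta$; $\eta\wedge\eta$ from the horizontal composition of $\eta$ with itself around the bigon; and finally $A\wedge^{\triangleright'}\eta$ from dragging the $\mathfrak{h}$-valued $\eta$ across an infinitesimal segment of $A$-parallel transport, which uses precisely the differential action $\triangleright'$ of $\mathfrak{g}$ on $\mathfrak{h}$. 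Balancing these pieces yields the second stated identity.

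The main obstacle will be this last step: the careful bookkeeping of surface-ordered exponentials and their decomposition under horizontal and vertical composition of 2-morphisms. The term $A\wedge^{\triangleright'}\eta$ is the one most easily lost or mis-signed, because it only arises once the path-ordering convention on $\Sigma$ is chosen consistently with the convention that produces $\eta$ from $\rho_\gamma$; fixing these conventions is what forces this mixed wedge to appear exactly in the stated form. Once the conventions are pinned down, both identities follow algebraically from the first- and second-order expansions of the pseudonaturality coherence squares, and smoothness of $g$ and $\eta$ is inherited directly from that of $\rho$.
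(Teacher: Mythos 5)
Your proposal is correct and follows essentially the same route as the paper, which offers no independent argument but simply cites Schreiber and Waldorf, where the two relations are obtained precisely by differentiating the 1-cell (naturality) and 2-cell (coherence) components of the smooth pseudonatural transformation --- your first-order expansion on infinitesimal paths and second-order expansion on infinitesimal bigons is the heuristic version of that derivation. The only caveat is that the existence and smoothness of the extracted pair $(g,\eta)$, which you assert via path-ordered integrals, is exactly the technical content supplied by the cited smooth-functor-to-forms correspondence rather than something your sketch establishes.
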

\begin{proof}
    See  Schreiber and Waldorf~\cite{schreiber2011smooth}
\end{proof}
\noindent If we choose a smooth function $g:U\rightarrow G$ and a $1$-form $\eta \in\Omega^{1}(U,\mathfrak{h})$ that satisfies the relations of the previous theorem, then as shown in \cite{schreiber2011smooth}, a pseudonatural transformation $\rho:hol\rightarrow hol^{\prime}$ can be defined, with $g$ and $\eta$ as its extracted data. Thus we can define two type of transformations, one resembling a usual gauge transformation and another one that is proper of the generalization.

\begin{itemize}
    \item Thin gauge transformations:
A smooth function ``gauge transformation'' $g:M\rightarrow G$ with $\eta =0$ induces the transformations
        \begin{align}
            A & \longrightarrow g^{-1}Ag+g^{-1}dg,\nonumber \\
            \beta & \longrightarrow g^{-1}\triangleright ^{\prime \prime}\beta,
        \end{align}
        which in turn cause the curvature $F$, the fake curvature $\mathcal{F}$ and the $3$-form curvature $\mathcal{G}$ to transform as
        \begin{align}
             F&\longrightarrow g^{-1}Fg,\nonumber \\ \mathcal{F}&\longrightarrow g^{-1}\mathcal{F}g,\nonumber \\
             \mathcal{G}&\longrightarrow g^{-1}\triangleright ^{\prime \prime}\mathcal{G}.
        \end{align}
    \item Fat gauge transformations:
Given a $\mathfrak{h}$-valued 1-form $\eta$, the transformations with $g:U\rightarrow G$ trivial,
are
        \begin{align}
            A&\longrightarrow A+\partial _{\ast}(\eta),\nonumber \\
            \beta& \longrightarrow \beta +d\eta+A\wedge ^{\triangleright ^{\prime}}\eta +\eta \wedge \eta,
        \end{align}
        under which the curvature $F$, the fake curvature $\mathcal{F}$ and the $3$-form curvature $\mathcal{G}$ transform as
        \begin{align}
        \label{tgruesas}
            F & \longrightarrow F+\partial _{\ast}(d\eta+A\wedge ^{\triangleright ^{\prime}}\eta +\eta \wedge \eta),\nonumber \\ \mathcal{F} & \longrightarrow \mathcal{F}, \nonumber \\
            \mathcal{G} & \longrightarrow \mathcal{G}+\mathcal{F}\wedge ^{\triangleright ^{\prime}} \eta.
        \end{align}
        When $H$ is abelian the term $\eta \wedge \eta$ is zero.
        The gauge transformation group is given by all pairs $(g,\eta)$, and the group product is given by the semi-direct product
        \begin{equation*}
            (g,\eta)(g^{\prime},\eta ^{\prime})=(gg^{\prime},(g\triangleright ^{\prime}\eta ^{\prime})\eta).
        \end{equation*}
\end{itemize}

\section*{Acknowledgments}
This work is  supported by   SECIHTI DCF-320821. M. S. is supported by the SECIHTI program {\it``Est\'ancias sab\'aticas vinculadas a la consolidaci\'on de grupos de investigaci\'on''}.
\bibliographystyle{unsrt}
\bibliography{ref}

\begin{thebibliography}{10}

\bibitem{witten1995s}
Edward Witten.
\newblock On s-duality in abelian gauge theory.
\newblock {\em Selecta Mathematica}, 1:383--410, 1995.

\bibitem{Verlinde:1995mz}
Erik~P. Verlinde.
\newblock {Global aspects of electric - magnetic duality}.
\newblock {\em Nucl. Phys. B}, 455:211--228, 1995.

\bibitem{Lozano:1995aq}
Y.~Lozano.
\newblock {S duality in gauge theories as a canonical transformation}.
\newblock {\em Phys. Lett. B}, 364:19--26, 1995.

\bibitem{Ganor:1995em}
O.~Ganor and J.~Sonnenschein.
\newblock {The 'dual' variables of Yang-Mills theory and local gauge invariant variables}.
\newblock {\em Int. J. Mod. Phys. A}, 11:5701--5728, 1996.

\bibitem{Garcia-Compean:1997qar}
H.~Garcia-Compean, O.~Obregon, J.~F. Plebanski, and C.~Ramirez.
\newblock {Towards a gravitational analog to S duality in nonAbelian gauge theories}.
\newblock {\em Phys. Rev. D}, 57:7501--7506, 1998.

\bibitem{Garcia-Compean:1999fdl}
H.~Garcia-Compean, O.~Obregon, C.~Ramirez, and M.~Sabido.
\newblock {Remarks on (2+1) selfdual Chern-Simons gravity}.
\newblock {\em Phys. Rev. D}, 61:085022, 2000.

\bibitem{Garcia-Compean:2000uho}
H.~Garcia-Compean, O.~Obregon, C.~Ramirez, and M.~Sabido.
\newblock {On S duality in (2+1) Chern-Simons supergravity}.
\newblock {\em Phys. Rev. D}, 64:024002, 2001.

\bibitem{lopez2025categorical}
AD~L{\'o}pez-Hern{\'a}ndez, Graciela Reyes-Ahumada, and Javier Chagoya.
\newblock Categorical generalization of bf theory coupled to gravity.
\newblock {\em Journal of High Energy Physics}, 2025(5):1--39, 2025.

\bibitem{baez2005higher}
John~C Baez and Urs Schreiber.
\newblock Higher gauge theory.
\newblock {\em arXiv preprint}, 2005.

\bibitem{baez2011invitation}
John~C Baez and John Huerta.
\newblock An invitation to higher gauge theory.
\newblock {\em General Relativity and Gravitation}, 43:2335--2392, 2011.

\bibitem{schreiber2008connections}
Urs Schreiber and Konrad Waldorf.
\newblock Connections on non-abelian gerbes and their holonomy.
\newblock {\em Theory Appl. Categ}, 28(17):476--540, 2013.

\bibitem{schreiber2011smooth}
Urs Schreiber and Konrad Waldorf.
\newblock Smooth functors vs. differential forms.
\newblock {\em Homology, Homotopy and Applications}, 13(1):143--203, 2011.

\bibitem{Kim_2020}
Hyungrok Kim and Christian Saemann.
\newblock Adjusted parallel transport for higher gauge theories.
\newblock {\em Journal of Physics A: Mathematical and Theoretical}, 53(44):445206, oct 2020.

\bibitem{krist2022explicit}
Dominik Rist, Christian Saemann, and Martin Wolf.
\newblock Explicit non-abelian gerbes with connections.
\newblock {\em arXiv preprint}, 2022.

\bibitem{Girelli:2007tt}
F.~Girelli, H.~Pfeiffer, and E.~M. Popescu.
\newblock {Topological Higher Gauge Theory - from BF to BFCG theory}.
\newblock {\em J. Math. Phys.}, 49:032503, 2008.

\bibitem{Radenkovic:2019qme}
Tijana Radenkovic and Marko Vojinovic.
\newblock {Higher Gauge Theories Based on 3-groups}.
\newblock {\em JHEP}, 10:222, 2019.

\bibitem{roberts2007inner}
David Roberts and Urs Schreiber.
\newblock The inner automorphism 3-group of a strict 2-group.
\newblock {\em Journal of Homotopy and Related Structures}, 3:193--245, 2008.

\end{thebibliography}
\end{document}